\providecommand{\tabularnewline}{\\}
\newenvironment{lyxlist}[1]
{\begin{list}{}
{\settowidth{\labelwidth}{#1}
 \setlength{\leftmargin}{\labelwidth}
 \addtolength{\leftmargin}{\labelsep}
 }}
{\end{list}}
\begin{document}

\title{Diffusion, Influence and Best-Response Dynamics in Networks: An Action
Model Approach}

\author{Rasmus K. Rendsvig\\ \email{rendsvig@gmail.com}}

\institute{LUIQ, Lund University, Sweden}
\maketitle
\begin{abstract}
\noindent {\small{}Threshold models and their dynamics may be used
to model the spread of `behaviors' in social networks. Regarding
such from a modal logical perspective, it is shown how standard update
mechanisms may be emulated using action models -- graphs encoding
agents' decision rules. A small class of action models capturing the
possible sets of decision rules suitable for threshold models is identified,
and shown to include models characterizing best-response dynamics
of both coordination and anti-coordination games played on graphs.}%
\begin{comment}
\noindent {\small{}The paper is concluded by discussing a uniform
logical approach to these dynamics, using reduction axioms.}
\end{comment}
{\small{} We conclude with further aspects of the action model approach
to threshold dynamics, including broader applicability and logical
aspects. Hereby, new links between social network theory, game theory
and dynamic `epistemic' logic are drawn.}{\small \par}
\end{abstract}

\noindent An individual's choice of phone, language use or convictions
may be influenced by the people around her \cite{Infostorms2013,Skyrms1996,Valente1996}.
How a new trend spreads through a population depends on how agents
are influenced by others, which in turn depends on the structure of
the population and on how easy agents are to influence.

This paper focuses on one particular account of social influence,
the notion of `threshold influence' \cite{Liu2014}. Threshold influence
relies on a simple imitation or conformity pressure effect: agents
adopt a behavior/fashion/semantics whenever some given threshold of
their social network neighbors have adopted it already. So-called
\emph{threshold models}, introduced by \cite{Granovetter1978,Schelling1978},
represent diffusion dynamics under threshold influence. Threshold
models have received much attention in recent literature \cite{Easley_Kleinberg_2010,KempeKleinberg2003,Morris00,Sharakarian_etal_2013},
also from authors in the logic community \cite{Apt2011,christoffhansenlori2013,ZoeRasmus2014,Liu2014,Ruan2011,Girard2011,Zhen2011}.
\enlargethispage{2\baselineskip}

\noindent \smallskip{}

\noindent \hrule\vspace{-7pt}

\noindent \begin{center}
\begin{tabular}{>{\centering}p{12cm}}
\includegraphics[angle=270]{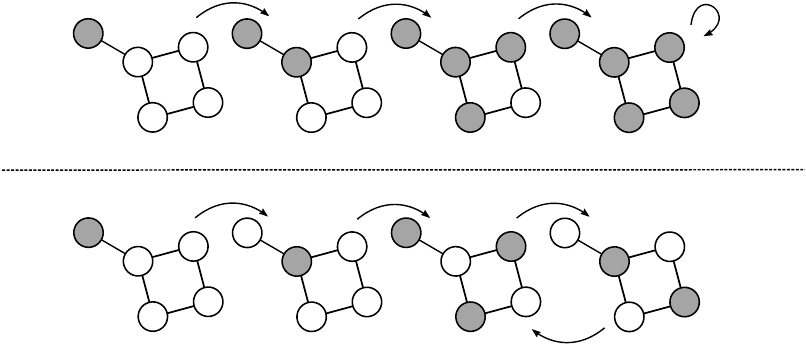}\medskip{}
\tabularnewline
\textbf{\footnotesize{}}%
\begin{tabular}{p{12cm}}
\textbf{\footnotesize{}Fig. 1.}{\footnotesize{} (Definitions below):
A threshold model with 5 agents, threshold $\theta=\frac{1}{4}$,
and behavior $B$ marked by gray. Top: agents change behavior in accordance
with equation (1) and the dynamics reach a fixed point. Bottom: agents
update according to equation (2). Here, the dynamics loop. }\tabularnewline
\end{tabular}\tabularnewline
\end{tabular}
\par\end{center}

In this paper, a novel approach to threshold models is taken by constructing
the dynamics using action models and product update \cite{qqqBaltagBMS_1998,qqqBenthem2006,qqqDitmarsch_Kooi_ontic}.
In this context, an action model may be regarded as a graph that encodes
decision rules. The product of a threshold model and an action model
is again a threshold model, but where each agent has now updated their
behavior according to the encoded decision rules.

The paper progresses as follows. First, threshold models and two typical
update rules are introduced. We then introduce a modal language interpreted
over threshold models, along with action models and product update.
We produce an action model for each of the two introduced update rules,
and show the step-wise equivalence of the two approaches. These two
action models gives rise to a small class of action models, which
is investigated in relation to tie-breaking rules, coordination game
and anti-coordination game best-response dynamics. We conclude with
a discussion of further aspects of the action model approach to threshold
dynamics, including broader applicability and logical aspects.

The motivation for the work is primarily technical. The author found
it interesting that threshold dynamics could so straightforwardly
be encoded using action models. There is however an interesting conceptual
twist: action models are not interpreted as being \emph{informational
events}, but as encoding \emph{decision rules} of agents. Hence, the
class arising from the action model encoding best-responses in coordination
games may be seen as containing all possible sets of decision rules
compatible with agents acting under the used notion of threshold influence.
The class contains variations of tie-breaking rules, and shows a neat
symmetry: for each ``coordination game action model'', the class
contains a ``dual'' version for anti-coordination games. From a
logical perspective, this class is interesting as each arising dynamics
may be treated in a uniform manner, using the reduction axiom method
well-known from dynamic epistemic logic \cite{qqqBaltagBMS_1998,qqqDitmarsch2008}.

\section{Threshold Models and their Dynamics}

\subsubsection*{\noindent Threshold Models. }

\noindent A threshold model includes a network $N$ of agents $\mathcal{A}$
and a behavior $B$ (or fashion, or product, or viral video) distributed
over the agents. As such, it represents the current spread of $B$
through the network. An adoption threshold prescribes how the state
will evolve: agents adopt $B$ when the proportion of their neighbors
who have already adopted it meets the threshold. Formally, a threshold
model is a tuple $\mathcal{M}=(\mathcal{A},N,B,\theta)$ where $\mathcal{A}$
is a finite set of agents, $N\subseteq\mathcal{A\times A}$ a irreflexive
and symmetric network, $B\subseteq\mathcal{A}$ a behavior, and $\theta\in[0,1]$
the adoption threshold.%
\footnote{The literature contains several variations, including infinite networks
\cite{Morris00}, non-inflating behavior \cite{Morris00}, agent-specific
thresholds, non-symmetric relations, weighted links \cite{KempeKleinberg2003},
and multiple behaviors \cite{Apt2011}.%
} For an agent $a\in\mathcal{A}$, her neighborhood is $N(a):=\{b:(a,b)\in N\}$.

\subsubsection*{\noindent Threshold Model Dynamics.}

\noindent Threshold models are used to investigate the spread of a
behavior over discrete time-steps $t_{0},t_{1},...$, i.e., the dynamics
of the behavior. Given an initial threshold model for $t_{0}$, $\mathcal{M}=(\mathcal{A},N,B_{0},\theta)$,
several update policies for the behavior set $B_{0}$ exists.%
\footnote{Attention is here restricted to deterministic, discrete time simultaneous
updates. See e.g. \cite{Newman2002} for stochastic processes.%
} One popular such \cite{ZoeRasmus2014,Easley_Kleinberg_2010,KempeKleinberg2003}
is captured by (\ref{eq:Inflating}):
\begin{equation}
B_{n+1}=B_{n}\cup\left\{ a:\frac{|N(a)\cap B_{n}|}{|N(a)|}\geq\theta\right\} .\label{eq:Inflating}
\end{equation}
I.e., $a$ plays (adopts, follows) $B$ at $t_{n+1}$ iff $a$ plays
$B$$\!$ at $t_{n}$, \textbf{or} a proportion of $a$'s neighbors
\emph{larger }$\!$\emph{or}$\!$\emph{ equal} to the threshold plays
$B$ at $t_{n}$.

The former disjunct makes $B$ \emph{increase} over time, i.e., $\forall n:B_{n}\subseteq B_{n+1}$.
This guarantees that (\ref{eq:Inflating}) reaches a fixed point.
The `or equal to' embeds a tie-breaking rule favoring $B$.

Inflation may be dropped and the tie-breaking rule changed by using
e.g. the policy specified by (\ref{eq:nonInf}) instead:

\begin{equation}
B_{n+1}={\textstyle \left\{ a:\frac{|N(a)\cap B_{n}|}{|N(a)|}>\theta\right\} \cup{\textstyle \left\{ a:\frac{|N(a)\cap B_{n}|}{|N(a)|}=\theta\mbox{ and }a\in B_{n}\right\} .}}\label{eq:nonInf}
\end{equation}

\noindent %\begin{equation}\begin{alignedat}{2} B_{n+1}= & \textstyle{ \left\{ a:\frac{|N(a)\cap B_{n}|}{|N(a)|}>\theta\right\} } \\ \cup & \textstyle{ \left\{ a:\frac{|N(a)\cap B_{n}|}{|N(a)|}=\theta\mbox{ and }a\in B_{n}\right\}.} \label{eq:nonInf} \end{alignedat}\end{equation}

\noindent The second set in the union invokes a conservative tie-breaking
rule: if $\frac{|N(a)\cap B_{n}|}{|N(a)|}=\theta$, $a$ will continue
her behavior from $t_{n}$. That (\ref{eq:nonInf}) does not cause
$B$ to inflate implies the possibility of \emph{loops} in behavior,
i.e. where $B_{n}=B_{n+2}\not=B_{n+1}$. Thereby (\ref{eq:nonInf})
does not necessarily reach a fixed point. %

\subsubsection*{\noindent Threshold Model Dynamics as Induced by Game Play.}

\noindent Threshold influence may naturally be seen as an instance
of a coordination problem: given enough of an agent's neighbors adopt
behavior $B$, the agent will seek to coordinate with that group by
adopting $B$ herself. This coordination problem may be modeled as
a coordination game

\noindent \vspace{-8bp}

\begin{center}
\begin{tabular}{r|c|c|}
\multicolumn{1}{r}{} & \multicolumn{1}{c}{$B$} & \multicolumn{1}{c}{$\neg B$}\tabularnewline
\cline{2-3} 
$B$ & $x,x$ & $0,0$\tabularnewline
\cline{2-3} 
$\neg B$ & $0,0$ & $y,y$\tabularnewline
\cline{2-3} 
\end{tabular}
\par\end{center}

\noindent played on the network: at each time-step, each agent chooses
one strategy from $\{B,\neg B\}$ and plays this strategy against
all their neighbors simultaneously. Agent $a$'s payoff $t_{n}$ is
then the sum of the payoffs of the $|N(a)|$ coordination games that
$a$ plays at time $t_{n}$. With these rules, $B$ is a best-response
for agent $a$ at time $t_{n}$ iff
\begin{equation}
x\cdot{\textstyle \frac{|N(a)\cap B_{n}|}{|N(a)|}}\geq y\cdot{\textstyle \frac{|N(a)\cap\neg B_{n}|}{|N(a)|}}\Leftrightarrow{\textstyle \frac{|N(a)\cap B_{n}|}{|N(a)|}}\geq\frac{y}{x+y}.\label{best-response}
\end{equation}

\noindent Setting $\theta:=\frac{y}{x+y}$, the right-hand side of
(\ref{best-response}) resembles the specifications from (\ref{eq:Inflating})
and (\ref{eq:nonInf}). The precise correlation is that (\ref{eq:nonInf})
captures the best-response dynamics for coordination game play on
networks when using conservative tie-breaking \cite{Morris00}, while
(\ref{eq:Inflating}) captures the same with tie-breaking biased towards
$B$ and the added assumption of a (possibly irrational) `seed'
of agents always playing $B$ \cite{Easley_Kleinberg_2010}.

\section{\noindent Threshold Models, Kripke Models and Action Models}

\noindent A threshold model gives rise to a Kripke model \cite{BlueModalLogic2001}
with $\mathcal{A}$ as domain, $N$ as relation and a valuation $\Vert\cdot\Vert:\Phi\longrightarrow\mathcal{P}(\mathcal{A})$,
$\Phi:=\{B\}$, determining the extension of the $B$ playing agents.
To describe features of agents' neighborhoods, we use a language $\mathcal{L}$
with suitable \emph{threshold modalities}:

\noindent 
\[
\top\;|\; B\;|\;\neg\varphi\;|\;\varphi\wedge\varphi\;|\;\langle{\scriptstyle \leq}\rangle\varphi\;|\;[{\scriptstyle \leq}]\varphi\;|\;({\scriptstyle =})\varphi
\]
The three operators could be parametrized by $\theta$, but to lighten
notation, we leave the threshold implicit.

Intuitively, if $a$ satisfies $\langle{\scriptstyle \leq}\rangle\varphi$,
then there exists a $\theta$ `large enough' set of $a$'s neighbors
that satisfy $\varphi$. E.g., if $\varphi:=B$, then at least a $\theta$
fraction of $a$'s neighbors satisfy $B$. According to (1), $a$
should then change his behavior to $B$. The operator is inspired
by \cite{BaltagTalk,Heifetz1998} and exemplified in Fig. 2. $[{\scriptstyle \leq}]$
is the universal `box' to the existential `diamond' $\langle{\scriptstyle \leq}\rangle$:
if $a$ satisfies $[{\scriptstyle \leq}]$$\varphi$, then all neighbors
in all $\theta$ `large enough' subsets of $a$'s neighborhood
satisfy $\varphi$. Finally, $({\scriptstyle =})\varphi$ captures
that exactly a $\theta$ fraction of the agent's neighbors satisfy
$\varphi$. In particular, if $a$ satisfies $({\scriptstyle =})B$,
then $a$ should invoke a tie-breaking rule.\smallskip{}

With threshold $\theta$, satisfaction in $\mathcal{M}$ is given
by standard Boolean clauses and the following:

\begin{center}
\begin{tabular}{lll}
$\mathcal{M},a\models B$$\phantom{\langle\!\underset{\leq}{\theta}\!\rangle}$ & iff & $\phantom{\langle\!\underset{\leq}{\theta}\!\rangle}$$a\in\Vert B\Vert$\tabularnewline
$\mathcal{M},a\models\langle{\scriptstyle \leq}\rangle\varphi$$\phantom{\langle\!\underset{\leq}{\theta}\!\rangle}$ & iff & $\phantom{\langle\!\underset{\leq}{\theta}\!\rangle}$$\exists C:\theta\leq\frac{|C\cap N(a)|}{|N(a)}\mbox{ and }\forall a\in C,\,\mathcal{M},a\models\varphi$\tabularnewline
$\mathcal{M},a\models[{\scriptstyle \leq}]\varphi$$\phantom{\langle\!\underset{\leq}{\theta}\!\rangle}$ & iff & $\phantom{\langle\!\underset{\leq}{\theta}\!\rangle}$$\forall C:\theta\leq\frac{|C\cap N(a)|}{|N(a)}\mbox{ implies }\forall a\in C,\,\mathcal{M},a\models\varphi$\tabularnewline
$\mathcal{M},a\models({\scriptstyle =})\varphi$$\phantom{\langle\!\underset{\leq}{\theta}\!\rangle}$ & iff & $\phantom{\langle\!\underset{\leq}{\theta}\!\rangle}$$\theta=\frac{N(a)\cap\Vert\varphi\Vert_{\mathcal{M}}}{|N(a)|}.$%
\begin{comment}
$=\{a:\exists C\subseteq N(a):\frac{|C|}{|N(a)|}\geq\vartheta\mbox{ with }C\subseteq|\!|\varphi|\!|_{\mathcal{M}}\}$
\end{comment}
\tabularnewline
\end{tabular}
\par\end{center}

\noindent The extension of $\varphi$ in $\mathcal{M}$ is denoted
$\Vert\varphi\Vert_{\mathcal{M}}:=\{a\in\mathcal{A}:\mathcal{M},a\models\varphi\}$.

\medskip{}

\noindent \hrule

\noindent \begin{center}
\includegraphics[scale=1.2]{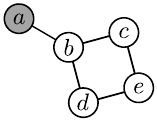}
\par\end{center}

\noindent \begin{center}
\begin{tabular}{p{12cm}}
\textbf{\footnotesize{}Fig. 2.}{\footnotesize{} A threshold model
$\mathcal{M}$ with $\theta=\frac{1}{4}$ and $B$ marked by gray.
$b$ satisfies $\langle{\scriptstyle \leq}\rangle B$, as $\mathcal{M},a\models B$
and $\frac{|\{a\}|}{|\{a,b,c\}|}\geq\frac{1}{4}.$ Agent $e$ satisfies
$[{\scriptstyle \leq}]\neg B$ as $\forall C\subseteq N(e):\frac{|C\cap N(e)|}{|N(e)|}\geq\theta$
(that is, for sets $\{c\},\{d\},\{c,d\}$), $C\subseteq\Vert\neg B\Vert_{\mathcal{M}}$.
Moreover, agent $a$ satisfies $\neg({\scriptstyle =})B\,\wedge\,[{\scriptstyle \leq}]\neg B$
-- hence, according to (\ref{eq:nonInf}), she then should start playing
$\neg B$, whereas (\ref{eq:Inflating}) will not allow her to change.}\tabularnewline
\end{tabular}
\par\end{center}

\noindent \hrule\vspace{5pt}

From $\langle{\scriptstyle \leq}\rangle$, $[{\scriptstyle \leq}]$
and $({\scriptstyle =})$, we define strict versions of the two former.
These are useful when encoding non-biased tie-breaking rules:
\begin{align*}
\langle{\scriptstyle <}\rangle\varphi: & =\langle{\scriptstyle \leq}\rangle\varphi\wedge\neg({\scriptstyle =})\varphi\\{}
[{\scriptstyle <}]\varphi: & =[{\scriptstyle \leq}]\varphi\wedge\neg({\scriptstyle =})\varphi
\end{align*}

Two comments on the threshold operators are due. First, the operators
do not form the basis of a \emph{normal} modal logic: $({\scriptstyle =})$
distributes over neither $\vee$ nor $\wedge$, and the `diamond'
$\langle{\scriptstyle \leq}\rangle$ does not distribute over $\vee$.%
\footnote{The latter was pointed out by Prof. A. Baltag for a similar operator
in \cite{BaltagTalk}.%
} The `box' $[{\scriptstyle \leq}]$ does validate \textbf{K}: $[{\scriptstyle \leq}](\varphi\rightarrow\psi)\rightarrow([{\scriptstyle \leq}]\varphi\rightarrow[{\scriptstyle \leq}]\psi)$
and thus distributes over $\wedge$, but it is not the dual of $\langle{\scriptstyle \leq}\rangle$,
i.e., $[{\scriptstyle \leq}]\varphi\leftrightarrow\neg\langle{\scriptstyle \leq}\rangle\neg\varphi$
is not valid.%
\footnote{The dual of $[{\scriptstyle \leq}]$ would have the universal quantifier
in the semantic clause of $\langle{\scriptstyle \leq}\rangle\varphi$
replaced by an existential one.%
} If $|\mathcal{A}|>1$, the right-to-left direction holds, but not
vice versa. Second, $({\scriptstyle =})\varphi$ does not imply that
$({\scriptstyle =})\neg\varphi$, as the semantics are given w.r.t.
$\theta$. $({\scriptstyle =})\varphi$ does imply that $\frac{N(a)\cap\Vert\neg\varphi\Vert_{\mathcal{M}}}{|N(a)|}=1-\theta$.
This point is important as only $\mathcal{M},a\models({\scriptstyle =})B$,
and not $\mathcal{M},a\models({\scriptstyle =})\neg B$, means that
$a$ must invoke a tie-breaking rule.

\noindent %
\begin{comment}
The second is the universal `box' to the existential `diamond'
$\langle{\scriptstyle \leq}\rangle$:
\[
[{\scriptstyle >}]\varphi:=\neg\langle{\scriptstyle \leq}\rangle\neg\varphi.
\]

\noindent See Fig. 2 for an exemplification. To see that $[{\scriptstyle >}]$
semantically behaves as the box to $\langle{\scriptstyle \leq}\rangle$,
recall that $\theta\in[0,1]$. Then simple manipulations shows that

\begin{center}
\begin{tabular}{ll}
$\Vert[{\scriptstyle >}]\varphi\Vert_{\mathcal{M}}=$ & $\{a:\forall C\subseteq N(a)\mbox{,}\frac{|C|}{|N(a)|}>\theta\Rightarrow C\subseteq\Vert\varphi\Vert_{\mathcal{M}}\}.$\tabularnewline
\end{tabular}
\par\end{center}

\noindent $\langle{\scriptstyle <}\rangle\varphi:=\langle{\scriptstyle \leq}\rangle\varphi\wedge\neg\langle{\scriptstyle =}\rangle\varphi$,
$\langle\!\underset{\leq}{\overline{\theta}}\!\rangle\varphi:=\neg\langle{\scriptstyle <}\rangle\neg\varphi$,
$\langle\!\underset{=}{\overline{\theta}}\!\rangle\varphi:=\langle{\scriptstyle =}\rangle\neg\varphi$,
and $\langle\!\underset{<}{\overline{\theta}}\!\rangle\varphi:=\neg\langle{\scriptstyle \leq}\rangle\neg\varphi$.

The $\langle\!\underset{<}{\overline{\theta}}\!\rangle$-operator,
the `box' of 
\end{comment}

\subsubsection*{\noindent Action Models and Product Update.}

Rather than updating threshold models by analyzing best responses
or consulting equations like (\ref{eq:Inflating}) or (\ref{eq:nonInf}),
they may be updated by taking the graph-theoretical product with a
graph that encodes \emph{decision rules}, uniformly followed by all
agents. Such graphs are known as \emph{action models (with postconditions)}
\cite{qqqBaltagBMS_1998,qqqBenthem2006,qqqDitmarsch_Kooi_ontic}.
To illustrate, then (cf. Proposition 1 below) $\mathcal{E}_{1}$ captures
the same dynamics as those invoked by (\ref{eq:Inflating}):

\noindent \begin{center}
\includegraphics[scale=1.8]{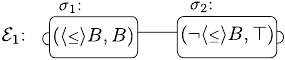}
\par\end{center}

\noindent In the current context, it is natural to interpret each
state of an action models as a decision rule.%
\footnote{The relation between actions is merely a technicality and is not given
an interpretation. Given a re-defined product operation that ignores
the relation of the action model, it could from both a conceptual
and technical point be omitted.%
} E.g., $\sigma_{1}$ encodes the rule `if a $\theta$ fraction or
more of your neighbors play $B$, then play $B$'. State $\sigma_{2}$
encodes that if the agent is not influenced to play $B$, she should
continue her current behavior.

Formally, by an \emph{action model} we here refer to a tuple $\mathcal{E}=(|\mathcal{E}|,R,cond)$
where $|\mathcal{E}|$ is a non-empty domain of states, $R\subseteq|\mathcal{E}|\times|\mathcal{E}|$
is a relation on $|\mathcal{E}|$, and $cond$ a pre- and postcondition
map $cond:|\mathcal{E}|\longrightarrow\mathcal{L}\times\{B,\neg B,\top\}$
with $cond(\sigma)=(\varphi,\psi)=:(pre(\sigma),post(\sigma))$.

The \emph{product update} \cite{qqqBaltagBMS_1998,qqqDitmarsch_Kooi_ontic}
of threshold model $\mathcal{M}$ and action model $\mathcal{E}$
is the threshold model $\mathcal{M}\otimes\mathcal{E}=(\mathcal{A}^{\uparrow},N^{\uparrow},B^{\uparrow},\theta)$
with $\theta$ from \emph{$\mathcal{M}$}, and 
\begin{lyxlist}{00.00.0000}
\item [{$\phantom{.}$}] \noindent $\mathcal{A}^{\uparrow}=\{(a,\sigma)\in\mathcal{A}\times|\mathcal{E}|:\mathcal{M},a\models pre(\sigma)\}$,$\phantom{\langle\!\underset{\leq}{\theta}\!\rangle}$
\item [{$\phantom{.}$}] \noindent $N^{\uparrow}\ni((a,\sigma),(b,\sigma'))$
iff $(a,b)\in N$ and $(\sigma,\sigma')\in R$, and$\phantom{\langle\!\underset{\leq}{\theta}\!\rangle}$
\item [{$\phantom{.}$}] \noindent $B^{\uparrow}\!=\{(s,\sigma)\!:\! s\!\in\! B\wedge post(\sigma)\!\not=\!\neg B\}$$\cup\{(s,\sigma)\!:\! post(\sigma)\!=\!\! B\}$.$\phantom{\langle\!\underset{\leq}{\theta}\!\rangle}$
\end{lyxlist}
By the last condition, $B^{\uparrow}$ consists of 1) the agents in
$B$ minus those who change to $\neg B$, plus 2) the agents that
change to $B$. Hence every agent will after the update again only
play one strategy. If $post(\sigma)=\top$, no change in behavior
is invoked.

\section{\noindent Action Model Dynamics}

\noindent Considering threshold models as Kripke models, it is possible
to construct action models that when applied using product update
will produce model sequences step-wise equivalent to those produced
by (\ref{eq:Inflating}) and (\ref{eq:nonInf}). Moreover, the used
models (in particular $\mathcal{E}_{2}$ below) gives rise to a simple
class of action models. This class, specified below, contains all
natural variations of the decision rules emulating (\ref{eq:Inflating})
and (\ref{eq:nonInf}). Thus, the class specifies all the different
sets of decision rules by which agents may update their behavior while
still behaving in the spirit of present notion of threshold influence.
\begin{proposition}
\noindent For any threshold model $\mathcal{M}$, the action model
$\mathcal{E}_{1}$ applied using product update produces model sequences
step-wise equivalent to those of (\ref{eq:Inflating}).\end{proposition}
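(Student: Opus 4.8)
The plan is to set up, for each step $n$, an isomorphism of threshold models between $(\mathcal{A},N,B_{n},\theta)$ — where $B_{n}$ evolves by (\ref{eq:Inflating}) — and the $n$-fold product $\mathcal{M}\otimes\mathcal{E}_{1}\otimes\cdots\otimes\mathcal{E}_{1}$, and then to propagate it by induction on $n$. Write $\mathcal{E}_{1}=(\{\sigma_{1},\sigma_{2}\},R,cond)$ with $cond(\sigma_{1})=(\langle{\scriptstyle \leq}\rangle B,\,B)$, $cond(\sigma_{2})=(\neg\langle{\scriptstyle \leq}\rangle B,\,\top)$, and $R$ the total relation on $\{\sigma_{1},\sigma_{2}\}$ (it is exactly totality of $R$ that makes $\otimes$ leave the network intact, see below).

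First I would record the arithmetic content of $\sigma_{1}$'s precondition: for any threshold model $\mathcal{M}'=(\mathcal{A},N,B',\theta)$ and any $a$ with $N(a)\neq\emptyset$, $\mathcal{M}',a\models\langle{\scriptstyle \leq}\rangle B$ iff $\frac{|N(a)\cap B'|}{|N(a)|}\geq\theta$. The ``if'' direction is witnessed by $C:=N(a)\cap B'$; for ``only if'', any witnessing $C$ has $C\cap N(a)\subseteq N(a)\cap B'$, so the fraction only grows on replacing $C$ by $N(a)\cap B'$. (Isolated agents are covered by whatever convention is already in force for $\tfrac{0}{0}$ in (\ref{eq:Inflating}), applied uniformly on both sides.) Since the preconditions of $\sigma_{1}$ and $\sigma_{2}$ are mutually exclusive and jointly exhaustive over $\mathcal{A}$, the map $f$ sending $a$ to $(a,\sigma_{1})$ when $\mathcal{M}',a\models\langle{\scriptstyle \leq}\rangle B$ and to $(a,\sigma_{2})$ otherwise is a bijection from $\mathcal{A}$ onto the domain $\mathcal{A}^{\uparrow}$ of $\mathcal{M}'\otimes\mathcal{E}_{1}$.

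Next I would verify that $f$ preserves network and behavior. Network: as $R$ is total, $(\sigma,\sigma')\in R$ always, so $((a,\sigma),(b,\sigma'))\in N^{\uparrow}$ iff $(a,b)\in N$, making $f$ a graph isomorphism $(\mathcal{A},N)\to(\mathcal{A}^{\uparrow},N^{\uparrow})$. Behavior: check the definition $B^{\uparrow}=\{(s,\sigma):s\in B'\wedge post(\sigma)\neq\neg B\}\cup\{(s,\sigma):post(\sigma)=B\}$ case by case. If $f(a)=(a,\sigma_{1})$, then $post(\sigma_{1})=B$ puts $(a,\sigma_{1})\in B^{\uparrow}$, matching that $\frac{|N(a)\cap B'|}{|N(a)|}\geq\theta$ forces $a$ into the $(\ref{eq:Inflating})$-update of $B'$; if $f(a)=(a,\sigma_{2})$, then $post(\sigma_{2})=\top\notin\{B,\neg B\}$, so $(a,\sigma_{2})\in B^{\uparrow}$ iff $a\in B'$, matching that with the fraction below $\theta$, (\ref{eq:Inflating}) keeps $a$ exactly when $a\in B'$. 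Hence $f$ is an isomorphism of threshold models from the $(\ref{eq:Inflating})$-update of $\mathcal{M}'$ onto $\mathcal{M}'\otimes\mathcal{E}_{1}$.

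Finally I would run the induction: the claim is immediate at $n=0$, and for the step I would use that $\mathcal{L}$-satisfaction and the operation $\otimes\mathcal{E}_{1}$ are invariant under isomorphism of threshold models — isomorphisms respect $N$, $B$ and hence the counting clauses for $\langle{\scriptstyle \leq}\rangle$, $[{\scriptstyle \leq}]$, $({\scriptstyle =})$ — so that if the $n$-th product is isomorphic to $(\mathcal{A},N,B_{n},\theta)$, its $\mathcal{E}_{1}$-update is isomorphic to $(\mathcal{A},N,B_{n},\theta)\otimes\mathcal{E}_{1}$, which by the previous paragraph is isomorphic to $(\mathcal{A},N,B_{n+1},\theta)$. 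I expect no conceptual obstacle; the work is bookkeeping: fixing the reading of ``step-wise equivalent'' as isomorphism of threshold models, checking that product update and the threshold modalities respect it, and handling the $\tfrac{0}{0}$ edge case of isolated agents. The behavior-set computation is the crux, but it reduces to the two-line case split above.
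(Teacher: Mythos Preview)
Your proposal is correct and follows essentially the same approach as the paper: define the map $a\mapsto(a,\sigma)$ using that the preconditions of $\mathcal{E}_{1}$ partition $\mathcal{A}$, observe that totality of $R$ makes it a graph isomorphism, and verify $f(B^{+})=B^{\uparrow}$ via the equivalence $\mathcal{M},a\models\langle{\scriptstyle \leq}\rangle B\Leftrightarrow\frac{|N(a)\cap B|}{|N(a)|}\geq\theta$. You are in fact more careful than the paper's own argument---your two-case split on $\sigma_{1}$ versus $\sigma_{2}$ correctly accounts for agents already in $B$ whose fraction falls below $\theta$ (the paper's iff-chain silently skips this case), and you make the induction over $n$ and the isolated-agent edge case explicit where the paper leaves both implicit.
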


\begin{proof}
\noindent Let $\mathcal{M}=(\mathcal{A},N,B,\theta)$ be arbitrary
with (1)-update $(\mathcal{A},N,B^{+},\theta)$ and $\mathcal{E}_{1}$-update
$(\mathcal{A}^{\uparrow},N^{\uparrow},B^{\uparrow},\theta)$. Then
$f$:$a\mapsto(a,\sigma),\sigma\in\{\sigma_{1},\sigma_{2}\}$ is an
isomorphism from $(\mathcal{A},N,B^{+})$ to $(\mathcal{A}^{\uparrow},N^{\uparrow},B^{\uparrow})$.
1) $|\mathcal{A}|=|\mathcal{A}^{\uparrow}|$, as the preconditions
of $\mathcal{E}_{1}$ partitions $\mathcal{A}$ entailing that no
agents multiply or die under product update. 2) $((a,\sigma),(b,\sigma'))\in N^{\uparrow}$
iff $(a,b)\in N$: $R$ from $\mathcal{E}_{1}$ is the full relation,
so $N$ dictates $N^{\uparrow}$. 3) $f(B^{+})=B^{\uparrow}$ as%
\begin{comment}
\noindent  $f(B^{+})=B^{\uparrow}$ as $a\in B^{+}\Leftrightarrow\frac{|N(a)\cap B|}{|N(a)|}\geq\theta\Leftrightarrow\exists C\subseteq N(a)\cap B:\frac{|C|}{|N(a)|}\geq\theta\Leftrightarrow\mathcal{M},a\models\langle{\scriptstyle \leq}\rangle B\Leftrightarrow\mathcal{M},a\models pre(\sigma_{1})\Leftrightarrow\mathcal{M\otimes}\mathcal{E}_{1},(a,\sigma_{1})\models B\Leftrightarrow f(a)\in B^{\uparrow}$.\qed
\end{comment}

\noindent \hspace{1cm}%
\begin{tabular}{>{\centering}p{5cm}c>{\centering}p{3cm}>{\centering}p{3cm}}
\noalign{\vskip0.2cm}
$a\in B^{+}$ & \textbf{iff} &  & \tabularnewline
\noalign{\vskip0.2cm}
$\frac{|N(a)\cap B|}{|N(a)|}\geq\theta$ & \textbf{iff} &  & \tabularnewline
\noalign{\vskip0.2cm}
$\exists C\subseteq N(a)\cap B:\frac{|C|}{|N(a)|}\geq\theta$ & \textbf{iff} &  & \tabularnewline
\noalign{\vskip0.2cm}
$\mathcal{M},a\models\langle{\scriptstyle \leq}\rangle B$ & \textbf{iff} &  & \tabularnewline
\noalign{\vskip0.2cm}
$\mathcal{M},a\models pre(\sigma_{1})$ & \textbf{iff} &  & \tabularnewline
\noalign{\vskip0.2cm}
$\mathcal{M\otimes}\mathcal{E}_{1},(a,\sigma_{1})\models B$ & \textbf{iff} & $f(a)\in B^{\uparrow}$ & \qed\tabularnewline
\end{tabular}
\end{proof}

The action model $\mathcal{E}_{1}$ contains only two states as (\ref{eq:Inflating})
invokes a biased tie-breaking rule, subsumed in the state $\sigma_{1}$
by using the non-strict $\langle{\scriptstyle \leq}\rangle B$ in
the precondition. (\ref{eq:nonInf}), in contrast, invokes a conservative,
unbiased tie-breaking rule. This requires an extra state to encode:

\begin{center}
\includegraphics[scale=1.8]{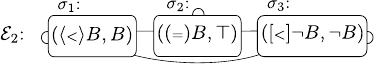}
\par\end{center}

\noindent Interpreted as decision rules, $\sigma_{1}$ of $\mathcal{E}_{2}$
states that if strictly more than a $\theta$ fraction of an agent
$a$'s neighbors plays $B$, then $a$ should do the same; $\sigma_{2}$
embodies the conservative tie-breaking rule: if \emph{exactly }a $\theta$
fraction of $a$'s neighbors play $B$ (and hence a $(1-\theta)$
fraction plays $\neg B$), then $a$ should not change her behavior;
finally, for $\sigma_{3}$, notice that if $[{\scriptstyle <}]\neg B$,
i.e., that all $\theta$ `strictly large enough' subsets of $a$'s
neighbors plays $\neg B$, then there is a strictly larger than $(1-\theta)$
fraction of her neighbors that play $\neg B$---$\sigma_{3}$ states
that in that case, $a$ should also play $\neg B$.
\begin{proposition}
\noindent For any threshold model $\mathcal{M}$, The action model
$\mathcal{E}_{2}$ applied using product update produces model sequences
step-wise equivalent to those of (\ref{eq:nonInf}).\end{proposition}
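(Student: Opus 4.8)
The plan is to follow the template used for Proposition~1: fix an arbitrary threshold model $\mathcal{M}=(\mathcal{A},N,B,\theta)$, write $(\mathcal{A},N,B^{\ast},\theta)$ for its (\ref{eq:nonInf})-update and $(\mathcal{A}^{\uparrow},N^{\uparrow},B^{\uparrow},\theta)$ for its $\mathcal{E}_{2}$-update, and produce an isomorphism $f$ from $(\mathcal{A},N,B^{\ast})$ to $(\mathcal{A}^{\uparrow},N^{\uparrow},B^{\uparrow})$. The natural candidate is $f:a\mapsto(a,\sigma_{a})$, where $\sigma_{a}\in\{\sigma_{1},\sigma_{2},\sigma_{3}\}$ is the state of $\mathcal{E}_{2}$ whose precondition $a$ satisfies in $\mathcal{M}$. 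As in Proposition~1, three things must then be verified: that $f$ is a well-defined bijection on domains, that it preserves the network relation, and that it preserves $B$.

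For the bijection it is enough to show that the preconditions $pre(\sigma_{1})=\langle{\scriptstyle <}\rangle B$, $pre(\sigma_{2})=({\scriptstyle =})B$ and $pre(\sigma_{3})=[{\scriptstyle <}]\neg B$ \emph{partition} $\mathcal{A}$ --- every agent satisfies exactly one of them --- for then, exactly as in Proposition~1, product update with $\mathcal{E}_{2}$ neither duplicates nor deletes agents, so $|\mathcal{A}^{\uparrow}|=|\mathcal{A}|$. Since for each $a$ the ratio $\frac{|N(a)\cap B|}{|N(a)|}$ is either $>\theta$, $=\theta$, or $<\theta$, the partition reduces to three semantic equivalences: $\mathcal{M},a\models\langle{\scriptstyle <}\rangle B\Leftrightarrow\frac{|N(a)\cap B|}{|N(a)|}>\theta$ (the equivalence for $\langle{\scriptstyle \leq}\rangle B$ exploited in Proposition~1, sharpened by the conjunct $\neg({\scriptstyle =})B$), $\mathcal{M},a\models({\scriptstyle =})B\Leftrightarrow\frac{|N(a)\cap B|}{|N(a)|}=\theta$ (direct from the clause for $({\scriptstyle =})$), and $\mathcal{M},a\models[{\scriptstyle <}]\neg B\Leftrightarrow\frac{|N(a)\cap B|}{|N(a)|}<\theta$ (equivalently, a strictly-larger-than-$(1-\theta)$ fraction of $a$'s neighbours play $\neg B$). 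Relation preservation is then identical to Proposition~1: the relation of $\mathcal{E}_{2}$ is the full relation on its three states, and since each $a$ occurs in $\mathcal{A}^{\uparrow}$ only in the pair $(a,\sigma_{a})$, we get $\bigl((a,\sigma_{a}),(b,\sigma_{b})\bigr)\in N^{\uparrow}$ iff $(a,b)\in N$.

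For $B$-preservation, argue by cases on $\sigma_{a}$, matching the two sets in (\ref{eq:nonInf}). If $\sigma_{a}=\sigma_{1}$ then $\frac{|N(a)\cap B|}{|N(a)|}>\theta$, so $a\in B^{\ast}$ by the first set of (\ref{eq:nonInf}), and since $post(\sigma_{1})=B$ also $(a,\sigma_{1})\in B^{\uparrow}$. If $\sigma_{a}=\sigma_{2}$ then $\frac{|N(a)\cap B|}{|N(a)|}=\theta$, so by the second set of (\ref{eq:nonInf}) $a\in B^{\ast}$ iff $a\in B$, and since $post(\sigma_{2})=\top$ the definition of $B^{\uparrow}$ gives $(a,\sigma_{2})\in B^{\uparrow}$ iff $a\in B$. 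If $\sigma_{a}=\sigma_{3}$ then $\frac{|N(a)\cap B|}{|N(a)|}<\theta$, so $a\notin B^{\ast}$, and since $post(\sigma_{3})=\neg B$ also $(a,\sigma_{3})\notin B^{\uparrow}$. In every case $a\in B^{\ast}\Leftrightarrow f(a)\in B^{\uparrow}$; hence $f$ is an isomorphism, and iterating it yields the asserted step-wise equivalence.

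The step I expect to be the actual obstacle is the third semantic equivalence, $\mathcal{M},a\models[{\scriptstyle <}]\neg B\Leftrightarrow\frac{|N(a)\cap B|}{|N(a)|}<\theta$, together with its role in the partition. Since $[{\scriptstyle <}]$ is not a normal box but the abbreviation $[{\scriptstyle <}]\varphi=[{\scriptstyle \leq}]\varphi\wedge\neg({\scriptstyle =})\varphi$, one has to unwind how the universal clause for $[{\scriptstyle \leq}]\neg B$ combines with the tie-breaking conjunct to force the $\neg B$-fraction strictly above $1-\theta$, and --- crucially for the partition --- establish the converse, i.e.\ that no agent with $\frac{|N(a)\cap B|}{|N(a)|}<\theta$ slips through all three preconditions. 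The discussion preceding the proposition supplies the left-to-right half of this; once the remaining half, and the (routine) equivalences for $\langle{\scriptstyle <}\rangle B$ and $({\scriptstyle =})B$, are in hand, the rest is the same bookkeeping as in Proposition~1.
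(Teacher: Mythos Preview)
Your proposal is correct and is precisely the argument the paper has in mind: its own proof is the single line ``Analogous to those of Propositions 1 and 3,'' and what you have written is exactly that analogy spelled out---the same isomorphism $a\mapsto(a,\sigma_a)$, the same three checks (partition of preconditions, full relation on $\mathcal{E}_2$, case split on $\sigma_a$ for $B$-preservation), and the same identification of the $\sigma_3$-case with the $[{\scriptstyle <}]\neg B$ semantics discussed just before the proposition. There is nothing to compare; you have simply unpacked the one-line deferral.
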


\begin{proof}
\noindent Analogous to those of Propositions 1 and 3 (see below).\qed
\end{proof}

\subsubsection*{\noindent The Class of Threshold Model Update Action Models.}

\noindent For the reasons mentioned in the proof of Proposition 1,
for an action model to change neither agent set nor network when applied
to an arbitrary threshold model, it must be fully connected and it's
preconditions must form a partition on the agent set. If one further
accepts only preconditions that are in the spirit of standard threshold
model updates, i.e., that agents change behavior based only on the
behavior of their immediate neighbors, then the class of `threshold
model update action models' is easy to map. For by the latter restriction,
$\langle{\scriptstyle <}\rangle B$, $({\scriptstyle =})B$ and $[{\scriptstyle >}]\neg B$
form the unique finest partition%
\footnote{The symmetric variant $\langle{\scriptstyle <}\rangle\neg B$, $({\scriptstyle =})\neg B$
and $[{\scriptstyle >}]B$ is ignored as it is equivalent up to interchange
of $B$ and $\neg B$.%
} on the agent set of any threshold model. Given the three possible
postconditions $B,\top$ and $\neg B$, the class of suitable action
models contains 27 models (Table 1).\medskip{}

\hrule

\begin{center}
\begin{tabular}{cl>{\centering}p{0.6cm}|>{\centering}p{0.6cm}|>{\centering}p{0.6cm}|>{\centering}p{0.6cm}|>{\centering}p{0.6cm}|>{\centering}p{0.6cm}|>{\centering}p{0.6cm}|>{\centering}p{0.6cm}|>{\centering}p{0.6cm}|>{\centering}p{0.6cm}|}
 & \emph{\small{}pre:} &  & {\small{}1} & {\small{}2} & {\small{}3} & {\small{}4} & {\small{}5} & {\small{}6} & {\small{}7} & {\small{}8} & {\small{}9}\tabularnewline
\hline 
{\small{}$\sigma_{1}$:} & {\small{}$\langle{\scriptstyle <}\rangle B$} &  & {\small{}$B$} & {\small{}$B$} & {\small{}$B$} & {\small{}$B$} & {\small{}$B$} & {\small{}$B$} & {\small{}$B$} & {\small{}$B$} & {\small{}$B$}\tabularnewline
{\small{}$\sigma_{2}$:} & {\small{}$({\scriptstyle =})B$} &  & {\small{}$B$} & {\small{}$B$} & {\small{}$B$} & {\small{}$\top$} & {\small{}$\top$} & {\small{}$\top$} & {\small{}$\neg B$} & {\small{}$\neg B$} & {\small{}$\neg B$}\tabularnewline
{\small{}$\sigma_{3}$:} & {\small{}$[{\scriptstyle >}]\neg B$} &  & {\small{}$B$} & {\small{}$\top$} & {\small{}$\neg B$} & {\small{}$B$} & {\small{}$\top$} & {\small{}$\neg B$} & {\small{}$B$} & {\small{}$\top$} & {\small{}$\neg B$}\tabularnewline
 &  & \multicolumn{1}{>{\centering}p{0.6cm}}{} & \multicolumn{1}{>{\centering}p{0.6cm}}{} & \multicolumn{1}{>{\centering}p{0.6cm}}{} & \multicolumn{1}{>{\centering}p{0.6cm}}{} & \multicolumn{1}{>{\centering}p{0.6cm}}{} & \multicolumn{1}{>{\centering}p{0.6cm}}{} & \multicolumn{1}{>{\centering}p{0.6cm}}{} & \multicolumn{1}{>{\centering}p{0.6cm}}{} & \multicolumn{1}{>{\centering}p{0.6cm}}{} & \multicolumn{1}{>{\centering}p{0.6cm}}{}\tabularnewline
 &  &  & {\small{}10} & {\small{}11} & {\small{}12} & {\small{}13} & {\small{}14} & {\small{}15} & {\small{}16} & {\small{}17} & {\small{}18}\tabularnewline
\hline 
{\small{}$\sigma_{1}$:} & {\small{}$\langle{\scriptstyle <}\rangle B$} &  & {\small{}$\top$} & {\small{}$\top$} & {\small{}$\top$} & {\small{}$\top$} & {\small{}$\top$} & {\small{}$\top$} & {\small{}$\top$} & {\small{}$\top$} & {\small{}$\top$}\tabularnewline
{\small{}$\sigma_{2}$:} & {\small{}$({\scriptstyle =})B$} &  & {\small{}$B$} & {\small{}$B$} & {\small{}$B$} & {\small{}$\top$} & {\small{}$\top$} & {\small{}$\top$} & {\small{}$\neg B$} & {\small{}$\neg B$} & {\small{}$\neg B$}\tabularnewline
{\small{}$\sigma_{3}$:} & {\small{}$[{\scriptstyle >}]\neg B$} &  & {\small{}$B$} & {\small{}$\top$} & {\small{}$\neg B$} & {\small{}$B$} & {\small{}$\top$} & {\small{}$\neg B$} & {\small{}$B$} & {\small{}$\top$} & {\small{}$\neg B$}\tabularnewline
 &  & \multicolumn{1}{>{\centering}p{0.6cm}}{} & \multicolumn{1}{>{\centering}p{0.6cm}}{} & \multicolumn{1}{>{\centering}p{0.6cm}}{} & \multicolumn{1}{>{\centering}p{0.6cm}}{} & \multicolumn{1}{>{\centering}p{0.6cm}}{} & \multicolumn{1}{>{\centering}p{0.6cm}}{} & \multicolumn{1}{>{\centering}p{0.6cm}}{} & \multicolumn{1}{>{\centering}p{0.6cm}}{} & \multicolumn{1}{>{\centering}p{0.6cm}}{} & \multicolumn{1}{>{\centering}p{0.6cm}}{}\tabularnewline
 &  &  & {\small{}19} & {\small{}20} & {\small{}21} & {\small{}22} & {\small{}23} & {\small{}24} & {\small{}25} & {\small{}26} & {\small{}27}\tabularnewline
\hline 
{\small{}$\sigma_{1}$:} & {\small{}$\langle{\scriptstyle <}\rangle B$} &  & {\small{}$\neg B$} & {\small{}$\neg B$} & {\small{}$\neg B$} & {\small{}$\neg B$} & {\small{}$\neg B$} & {\small{}$\neg B$} & {\small{}$\neg B$} & {\small{}$\neg B$} & {\small{}$\neg B$}\tabularnewline
{\small{}$\sigma_{2}$:} & {\small{}$({\scriptstyle =})B$} &  & {\small{}$B$} & {\small{}$B$} & {\small{}$B$} & {\small{}$\top$} & {\small{}$\top$} & {\small{}$\top$} & {\small{}$\neg B$} & {\small{}$\neg B$} & {\small{}$\neg B$}\tabularnewline
{\small{}$\sigma_{3}$:} & {\small{}$[{\scriptstyle >}]\neg B$} &  & {\small{}$B$} & {\small{}$\top$} & {\small{}$\neg B$} & {\small{}$B$} & {\small{}$\top$} & {\small{}$\neg B$} & {\small{}$B$} & {\small{}$\top$} & {\small{}$\neg B$}\tabularnewline
\end{tabular}
\par\end{center}

\noindent \textbf{\footnotesize{}Table 1.}{\footnotesize{} Each action
model contains three states with preconditions specified by }\emph{\footnotesize{}pre}{\footnotesize{}
and postconditions by columns 1 to 27.\medskip{}
}{\footnotesize \par}

\hrule\medskip{}

As mention, this class of action models may be seen as containing
all the possible sets of decision rules compatible with the used notion
of threshold influence. Using action models it is a simple, combinatorial
task to map. This is a benefit of using action models to define dynamics
over the set theoretic specification.

\subsubsection*{\noindent Dynamics Induced by Action Models.}

\noindent Note that the action model $\mathcal{E}_{1}$ is not explicitly
listed in Table 1. It is not so as $\mathcal{E}_{1}$ is based on
a coarser partition of the agent set, containing two rather than three
cells. It is however equivalent to the listed model 2: simply collapse
states $\sigma_{1}$ and $\sigma_{2}$ to one.

The class include three trivial dynamics induced by models 1, 14 and
27 and seven that make little sense (4, 7, 8, 16, 17 and 24).

The best-response dynamics of coordination games are emulated by models
3, 6 and 9, capturing discriminating (3,9) and conservative (6) tie-breaking
(cf. Proposition 2), while models 2, 5, 15 and 18 capture inflating
(`seeded') coordination game dynamics.

Proposition 3 below lends credences to the conjecture that models
19, 22 and 25 capture the best-response dynamics for anti-coordination
games with discriminating (19,25) and conservative (22) tie-breaking,
and that 10, 13, 23 and 26 capture inflating dynamics of anti-coordination
games.
\begin{proposition}
\noindent For any threshold model, the best-response dynamics of the
anti-coordination game

\noindent \hspace{5cm}%
\begin{tabular}{r|c|c|}
\multicolumn{1}{r}{} & \multicolumn{1}{c}{$B$} & \multicolumn{1}{c}{$\neg B$}\tabularnewline
\cline{2-3} 
$B$ & $0,0$ & $y,x$\tabularnewline
\cline{2-3} 
$\neg B$ & $x,y$ & $0,0$\tabularnewline
\cline{2-3} 
\end{tabular}\vspace{7pt}

\noindent played with the conservative tie-breaking rule is step-wise
equivalent to applying the action model 22 ($\mathcal{E}_{22}$) from
Table 1 with $\theta=\frac{x}{y+x}$.\end{proposition}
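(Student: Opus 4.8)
The plan is to run the argument of Proposition~1 almost verbatim, with $\mathcal{E}_{22}$ in place of $\mathcal{E}_{1}$ and the best-response dynamics of the anti-coordination game in the role played there by~(\ref{eq:Inflating}). So fix an arbitrary threshold model $\mathcal{M}=(\mathcal{A},N,B_{n},\theta)$ with $\theta=\frac{x}{y+x}$; write $(\mathcal{A},N,B',\theta)$ for the threshold model obtained from $\mathcal{M}$ by one conservative best-response step, and $\mathcal{M}\otimes\mathcal{E}_{22}=(\mathcal{A}^{\uparrow},N^{\uparrow},B^{\uparrow},\theta)$ for the product update. Since $\langle{\scriptstyle <}\rangle B$, $({\scriptstyle =})B$ and $[{\scriptstyle >}]\neg B$ form a partition of $\mathcal{A}$ (the partition fact noted above), the map $f\colon a\mapsto(a,\sigma)$, with $\sigma$ the unique state of $\mathcal{E}_{22}$ for which $\mathcal{M},a\models pre(\sigma)$, is a well-defined bijection $\mathcal{A}\to\mathcal{A}^{\uparrow}$; and as $\mathcal{E}_{22}$ carries the full relation, $f$ is an isomorphism between $(\mathcal{A},N)$ and $(\mathcal{A}^{\uparrow},N^{\uparrow})$ -- precisely as in points~1) and~2) of the proof of Proposition~1. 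Everything therefore reduces to checking $f(B')=B^{\uparrow}$, whereupon step-wise equivalence follows by induction on the time index: both dynamics are deterministic and each sends a threshold model to an isomorphic threshold model.

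For $f(B')=B^{\uparrow}$ one lines up the best-response analysis with the postconditions $post(\sigma_{1})=\neg B$, $post(\sigma_{2})=\top$, $post(\sigma_{3})=B$ of $\mathcal{E}_{22}$. Summing over $a$'s $|N(a)|$ games, playing $B$ scores $a$ the amount $y$ against each $\neg B$-neighbour and $0$ against each $B$-neighbour, while playing $\neg B$ scores $x$ against each $B$-neighbour and $0$ against each $\neg B$-neighbour; comparing these two totals and rearranging turns the best-response condition into an inequality on the fraction $\frac{|N(a)\cap B_{n}|}{|N(a)|}$, and for $\theta=\frac{x}{y+x}$ the trichotomy ``$\neg B$ is the unique best response / there is a tie / $B$ is the unique best response'' coincides cell for cell with ``$\mathcal{M},a\models\langle{\scriptstyle <}\rangle B$ / $\mathcal{M},a\models({\scriptstyle =})B$ / $\mathcal{M},a\models[{\scriptstyle >}]\neg B$''. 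Hence, under conservative tie-breaking, $a$ plays $B$ at $t_{n+1}$ iff $\mathcal{M},a\models[{\scriptstyle >}]\neg B$, or $\mathcal{M},a\models({\scriptstyle =})B$ and $a\in B_{n}$. On the product side, unwinding the clause for $B^{\uparrow}$: $post(\sigma_{3})=B$ places every $(a,\sigma_{3})$ with $\mathcal{M},a\models[{\scriptstyle >}]\neg B$ in $B^{\uparrow}$; the ``$post(\sigma)\neq\neg B$'' disjunct adds exactly the pairs $(a,\sigma_{2})$ with $a\in B_{n}$ and $\mathcal{M},a\models({\scriptstyle =})B$; and the $\sigma_{1}$-cell contributes nothing since $post(\sigma_{1})=\neg B$. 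Pulling back along $f$, this is exactly the set displayed in the preceding sentence, so $f(B')=B^{\uparrow}$.

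The one genuinely delicate point -- and the place where the stated value of $\theta$ is used -- is the claim that the ``keep current behaviour'' case of conservative tie-breaking falls in the cell $({\scriptstyle =})B$ and not in $({\scriptstyle =})\neg B$: the $({\scriptstyle =})$-operator is threshold-relative, so $({\scriptstyle =})B$ and $({\scriptstyle =})\neg B$ are inequivalent, and only the former picks out the agents actually facing a $B$-versus-$\neg B$ tie in the anti-coordination game at threshold $\frac{x}{y+x}$; one must also confirm that the two strict cells $[{\scriptstyle >}]\neg B$ and $\langle{\scriptstyle <}\rangle B$ are paired with the postconditions $B$ and $\neg B$ the right way round. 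Once that alignment is in place, the remainder is a transcription of the Proposition~1 argument; the same template, with $\mathcal{E}_{2}$ and~(\ref{eq:nonInf}), gives Proposition~2.
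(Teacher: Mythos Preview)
Your overall strategy---build the isomorphism $f$ as in Proposition~1 and reduce everything to $f(B')=B^{\uparrow}$---is exactly the paper's. The gap is in the cell-by-cell alignment you assert. Write $p=\frac{|N(a)\cap B_n|}{|N(a)|}$. From your own payoff computation, $B$ is a best response iff $y(1-p)\ge xp$, i.e.\ iff $p\le\frac{y}{x+y}$; so the anti-coordination trichotomy has its boundary at $p=\frac{y}{x+y}=1-\theta$, \emph{not} at $p=\theta=\frac{x}{x+y}$. The Table~1 preconditions $\langle{\scriptstyle <}\rangle B$, $({\scriptstyle =})B$, $[{\scriptstyle >}]\neg B$ partition $\mathcal{A}$ at $p=\theta$, so your claimed ``cell for cell'' coincidence fails whenever $x\ne y$. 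Concretely, take $x=1$, $y=2$, $\theta=\tfrac13$, and an agent $a$ with three neighbours, one of whom plays $B$: then $p=\tfrac13=\theta$, so $a$ satisfies $({\scriptstyle =})B$ and $\mathcal{E}_{22}$ leaves her behaviour unchanged; yet $B$ is her \emph{strict} best response (payoff $4$ versus $1$), so if she currently plays $\neg B$ the two dynamics disagree. The ``genuinely delicate point'' you flag is therefore not merely delicate but wrong as resolved: $({\scriptstyle =})B$ does \emph{not} pick out the tied agents at threshold $\frac{x}{x+y}$.

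The paper's own proof avoids this by working on the $\neg B$ side: it matches $\frac{|N(a)\cap\neg B|}{|N(a)|}>\theta$ directly with $\langle{\scriptstyle <}\rangle\neg B$ and $\frac{|N(a)\cap\neg B|}{|N(a)|}=\theta$ with $({\scriptstyle =})\neg B$, equivalences that are immediate from the semantic clauses. In effect it uses the ``symmetric variant'' of the precondition partition (cf.\ the footnote preceding Table~1), and with those preconditions the alignment at $\theta=\frac{x}{x+y}$ is exact; the paper then identifies these with $pre(\sigma_1)$, $pre(\sigma_2)$ and reads off $post(\sigma_1)=B$, $post(\sigma_2)=\top$. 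If you prefer to keep the $B$-based preconditions literally as in Table~1, the clean repair is to take $\theta=\frac{y}{x+y}$ instead: then the best-response boundary and the partition boundary coincide, and the rest of your argument goes through verbatim.
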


\begin{proof}
\noindent Let $\mathcal{M}=(\mathcal{A},N,B,\theta)$. Playing $B$
is a best-response in $\mathcal{M}$ for agent $a$ iff
\[
y\cdot{\textstyle \frac{|N(a)\cap\neg B|}{|N(a)|}}\geq x\cdot{\textstyle \frac{|N(a)\cap B|}{|N(a)|}}\Leftrightarrow{\textstyle \frac{|N(a)\cap\neg B|}{|N(a)|}}\geq{\scriptstyle \frac{x}{y+x}}=\theta
\]

\noindent Hence, given the tie-breaking rule, the next set of $B$-players
will be
\[
B^{+}=\{a:{\textstyle \frac{|N(a)\cap\neg B|}{|N(a)|}}>\theta\}\cup\{a:{\textstyle \frac{|N(a)\cap\neg B|}{|N(a)|}}=\theta\mbox{ and }a\in B\}.
\]

Let $\mathcal{M}\otimes\mathcal{E}_{22}=(\mathcal{A}^{\uparrow},N^{\uparrow},B_{n}^{\uparrow},\theta)$.
Then $g:a\mapsto(a,\sigma)$, $\sigma\in|\mathcal{E}_{22}|$, is an
isomorphism from $(\mathcal{A},N,B^{+})$ to $(\mathcal{A}^{\uparrow},N^{\uparrow},B_{n}^{\uparrow})$.
That $(\mathcal{A},N)\cong_{g}(\mathcal{A}^{\uparrow},N^{\uparrow})$
follows from the proof of Proposition 1.

\begin{tabular}{>{\centering}p{4cm}c>{\centering}p{4.5cm}cc}
\noalign{\vskip0.2cm}
$a\in B^{+}$ \textbf{iff} &  &  &  & \tabularnewline
\noalign{\vskip0.2cm}
$\frac{|N(a)\cap\neg B|}{|N(a)|}>\theta$ & or & $\frac{|N(a)\cap\neg B|}{|N(a)|}=\theta$ and $a\in B$ & \textbf{iff} & \tabularnewline
\noalign{\vskip0.2cm}
$\mathcal{M},a\models\langle{\scriptstyle <}\rangle\neg B$ & or & $\mathcal{M},a\models B\wedge({\scriptstyle =})\neg B$ & \textbf{iff} & \tabularnewline
\noalign{\vskip0.2cm}
$\mathcal{M},a\models pre(\sigma_{1})$ & or & $\mathcal{M},a\models B\wedge pre(\sigma_{2})$ & \textbf{iff} & \tabularnewline
\noalign{\vskip0.2cm}
$\mathcal{M\otimes E}_{22},(a,\sigma_{1})\models B$\linebreak{}
 (as $post(\sigma_{1})=B$) & or & $\mathcal{M\otimes E}_{22},(a,\sigma_{2})\models B$\linebreak{}
(as $post(\sigma_{2})=\top$) & \textbf{iff} & $g(a)\in B^{\uparrow}$\qed\tabularnewline
\noalign{\vskip0.2cm}
 &  &  &  & \tabularnewline
\end{tabular}
\end{proof}

\subsubsection*{Logics for Threshold Dynamics.}

Given the uniform, action model approach to the dynamics outlined,
it may be conjectured that the dynamics may also be treated by a uniform
logical approach, particularly the reduction axiom method well-known
from dynamic epistemic logic \cite{qqqBaltagBMS_1998,qqqDitmarsch2008}.

Three things are required to obtain a complete logic for one of the
dynamics:
\begin{enumerate}
\item A complete axiomatization for the threshold operators $\langle{\scriptstyle \leq}\rangle,[{\scriptstyle \leq}]$
and $({\scriptstyle =})$,
\item A complete axiomatization of the network properties, and
\item Reduction laws for the used action model.
\end{enumerate}
For 1, one may search for results in the literature on probabilistic
modal logic. No suitable, general result is known to the author. 2
is easily obtained, though it requires a richer language, extending
$\mathcal{L}$ with a normal modal operator $\lozenge$ and hybrid
logical \emph{nominals }$\{i,j,...\}$. The latter is required to
express the irreflexivity of the network relation, characterized by
$i\rightarrow\neg\lozenge i$. To complete a combined logic, interaction
axioms for the thresholds operators and normal modal operators should
also be added. A reduction axiom-based logic for action models with
post-conditions already exists (the logic \textbf{UM }from \cite{qqqDitmarsch_Kooi_ontic}),
but the system should be modified to suit the hybrid nominals and
threshold modalities. If such a combined logic is obtained for one
of the dynamics, one will automatically obtain complete logics for
all of the 27 dynamics induced by the action models of Table 1, with
the only variation between them being the used action model in the
dynamic modalities.

\section{An Action Model for `Belief Change in the Community'}

One reviewer asked whether there is a relation between the action
model approach used here, and the finite state automata approach introduced
in \cite{Zhen2011} for threshold influence dynamics of preferences,
and in particular, whether a translation between the two approaches
exist. We conjecture that this is indeed the case. To lend credence
to this conjecture, we show this may be done for the slightly simpler
framework of threshold influence of belief change from \cite{Liu2014}.

The basic framework of \cite{Liu2014} investigates the dynamics of
\emph{strong }and \emph{weak influence} of beliefs among agents in
a symmetric and irreflexive network. Beliefs are represented by three
mutually exclusive atoms $Bp,B\neg p$ and $Up$, evaluated at agents
in the network, as above. $\mathcal{M},a\models Bp$ reads `$a$
believes $p$', and being undecided about $p$, $Up$, is equivalent
to $\neg Bp\wedge\neg B\neg p$. To describe the network, a normal
box operator $F$ is used: $\mathcal{M},a\models F\varphi$ iff $\forall b\in N(a),\mathcal{M},b\models\varphi$.
$F$ has dual $\langle F\rangle$ -- $\langle F\rangle\varphi$ reads
`I have a friend that satisfies $\varphi$'. Call the language $\mathcal{L}'$.

An agent is strongly influence to believe $\varphi\in\{p,\neg p\}$
if all her friends believe $\varphi$, and weakly influenced to believe
$\varphi$ if no friends believe $\neg\varphi$ while at least one
friend believes $\varphi$. With
\begin{align*}
S\varphi & :=FB\varphi\wedge\langle F\rangle B\varphi\mbox{ and}\\
W\varphi & :=F\neg B\neg\varphi\wedge\langle F\rangle B\varphi,
\end{align*}
the dynamics of strong and weak influence are then characterized by
the finite state automaton in Fig. 3, applied to all agents simultaneously.

\noindent \begin{center}
\begin{tabular}{>{\centering}p{12cm}}
\includegraphics[scale=1.6]{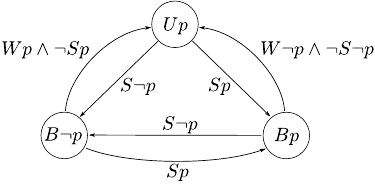}\medskip{}
\tabularnewline
\textbf{\footnotesize{}}%
\begin{tabular}{p{12cm}}
\textbf{\footnotesize{}Fig. 3.}{\footnotesize{} The automaton of \cite{Liu2014},
which characterizes agents' belief change under weak and strong influence.
If an agent is undecided about $p$, i.e., in the state $Up$, and
is strongly influenced to believe $p$, $Sp$, she will change to
state $Bp$, i.e., believe $p$. The automaton is deterministic.}\tabularnewline
\end{tabular}\tabularnewline
\end{tabular}
\par\end{center}

Given this setup, it is no hard task to construct an action model
over $\mathcal{L}'$ that will invoke the same dynamics. This may
be done systematically by the construction: 1) for each state-transition-state
triple $(s,t,s')$ from the automaton, construct an action model state
$\sigma$ with the conjunction of the labels of $s$ and $t$ as precondition,
and the label from $s'$ as postcondition, and 2) let the relation
of the action model be the full relation. The resulting action model
$\mathcal{I}$ is depicted in Fig. 4. It is easy to verify that the
effects of the two approaches are equivalent.

\noindent \begin{center}
\begin{tabular}{>{\centering}p{12cm}}
\includegraphics[scale=1.6]{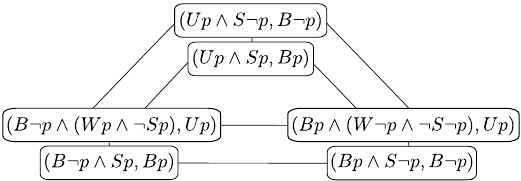}\medskip{}
\tabularnewline
\textbf{\footnotesize{}}%
\begin{tabular}{p{12cm}}
\textbf{\footnotesize{}Fig. 4.}{\footnotesize{} The action model $\mathcal{I}$,
invoking the same dynamics as the automaton of Fig. 3 (some edges
are omitted). The top-most state makes an agent change from state
$Up$ to state $B\neg p$ if the agents also is strongly influenced
to believe $\neg p$, etc.}\tabularnewline
\end{tabular}\tabularnewline
\end{tabular}
\par\end{center}

\noindent The construction method used defines a function from automata
to action models. If one restricts attention to action models with
preconditions of the form $(\varphi\wedge\psi)$, a function from
action models to automata may be defined by the construction: 1) for
each action model state $\sigma$, $cond(\sigma)=((\varphi\wedge\psi),\chi)$,
construct a automaton state with label $\varphi$ and one with label
$\chi$, and collapse all automata states with equivalent labels,
and 2) for each all automaton states with labels $\varphi$ and $\chi$,
add a transition with label $\psi$ between them if there exists an
action model state with $cond(\sigma)=((\varphi\wedge\psi),\chi)$.
Combining the two constructions provides a bijection, serving as translation.%
\begin{comment}
The construction method used defines a function from automata to action
models. If one restricts attention to action models with preconditions
with ordered conjunctions $(\varphi\wedge\psi)$, a function from
action models to automata may be defined by the construction: 1) for
each action model state $\sigma$, $cond(\sigma)=((\varphi\wedge\psi),\chi)$,
construct a automaton state-transition-state triple $(s,t,s')$ with
$s$ labeled $\varphi$, $t$ labeled $\psi$ and $s'$ labeled $\chi$,
and 2) collapse all automaton states with equivalent labels, and collapse
all transitions with equivalent labels, start and end states. Combining
the two constructions provides a bijection, serving as translation.
\end{comment}

\subsubsection*{A Logic for Belief Change in the Community.}

Given that it is possible to emulate the dynamics invoked by the finite
state automaton using an action model, finding a sound and complete
logic for the dynamics should be unproblematic. In fact, as $F$ is
a normal modal operator, the case is simpler than for threshold dynamics.
Again some hybrid machinery is required to capture the irreflexive
frame condition, but if this requirement is dropped, the reduction
axiom system from \cite{qqqDitmarsch_Kooi_ontic} provides the desired
result.

\section{\noindent Closing Remarks}

\noindent It has been argued that action models may be used to emulate
the best-response dynamics on coordination and anti-coordination games
played on networks by showing the product updates equivalent to the
threshold model dynamics induced by game play, and that the method
is applicable to the framework of threshold influence from \cite{Liu2014}.
It is conjectured that the action model approach to threshold dynamics
lightens the work of finding complete logics, using methods well-known
from dynamic epistemic logic, hereby providing new connections between
game theory, social network theory and dynamic `epistemic' logic.

Two questions present themselves. First, is it possible to rationalize
the seven unaccounted for action models in the identified class, by
moving from action models to game playing situations? Second, what
is the extent of the applicability of action models? The present paper
utilizes only a fraction of the potential of action models, as such
may also be used to systematically alter the agent set and network.
Changing the agent set may be used to model agent death and birth,
whereby deterministic SIRS-like epidemiological dynamics \cite{Newman2002}
may be captured. Alterations to the social network may be used to
model e.g. rise in popularity of information sources.


\begin{thebibliography}{10}
\providecommand{\url}[1]{\texttt{#1}}
\providecommand{\urlprefix}{URL }

\bibitem{Apt2011}
Apt, K., Markakis, E.: {Diffusion in Social Networks with Competing Products}.
  In: Persiano, G. (ed.) SAGT 2011. pp. 212--223. LNCS 6982, Springer (2011)

\bibitem{BaltagTalk}
Baltag, A.: {Modal Logics for Social Networks (Talk)}. Tsinghua Meets the ILLC
  Tsinghua University (2013)

\bibitem{qqqBaltagBMS_1998}
Baltag, A., Moss, L.S., Solecki, S.: {The Logic of Public Announcements, Common
  Knowledge, and Private Suspicions (extended abstract)}. In: Proc. of the
  intl. conf. TARK 1998. pp. 43--56. Morgan Kaufmann Publishers (1998)

\bibitem{qqqBenthem2006}
van Benthem, J., van Eijck, J., Kooi, B.: {Logics of communication and change}.
  Information and Computation  204(11),  1620--1662 (2006)

\bibitem{BlueModalLogic2001}
Blackburn, P., de~Rijke, M., Venema, Y.: {Modal Logic}. Cambridge University
  Press (2001)

\bibitem{christoffhansenlori2013}
Christoff, Z., Hansen, J.U.: A two-tiered formalization of social influence.
  In: Grossi, D., Roy, O., Huang, H. (eds.) Logic, Rationality, and
  Interaction, pp. 68--81. LNCS 8196, Springer (2013)

\bibitem{ZoeRasmus2014}
Christoff, Z., Rendsvig, R.K.: Dynamic logics for threshold models and their
  epistemic extension. ELISIEM, ESSLLI 2014  (2014)

\bibitem{qqqDitmarsch2008}
van Ditmarsch, H., van~der Hoek, W., Kooi, B.: {Dynamic Epistemic Logic}.
  Springer (2008)

\bibitem{qqqDitmarsch_Kooi_ontic}
van Ditmarsch, H., Kooi, B.: {Semantic Results for Ontic and Epistemic Change}.
  In: Bonanno, G., van~der Hoek, W., Wooldridge, M. (eds.) Logic and the
  Foundations of Game and Decision Theory (LOFT 7). pp. 87--117. No. Loft 7 in
  Texts in Logic and Games, Vol. 3, Amsterdam University Press (2008)

\bibitem{Easley_Kleinberg_2010}
Easley, D., Kleinberg, J.: {Networks, Crowds, and Markets}. Cambridge
  University Press (2010)

\bibitem{Granovetter1978}
Granovetter, M.: {Threshold Models of Collective Behavior}. American Journal of
  Sociology  83(6),  1420--1443 (1978)

\bibitem{Infostorms2013}
Hansen, P.G., Hendricks, V.F., Rendsvig, R.K.: {Infostorms}. Metaphilosophy
  44(3),  301--326 (2013)

\bibitem{Heifetz1998}
Heifetz, A., Mongin, P.: {The Modal Logic of Probability}. In: TARK '98
  Proceedings of the 7th Conference on Theoretical Aspects of Rationality and
  Knowledge. pp. 175--185. Morgan Kaufmann Publishers (1998)

\bibitem{KempeKleinberg2003}
Kempe, D., Kleinberg, J., Tardos, E.: {Maximizing the Spread of Influence
  through a Social Network}. In: Proc. 9th ACM SIGKDD intl. conf. on Knowledge
  Discovery and Data Mining, pp. 137--146. ACM, New York, NY, USA (2003)

\bibitem{Liu2014}
Liu, F., Seligman, J., Girard, P.: Logical dynamics of belief change in the
  community. Synthese  191(11),  2403--2431 (2014)

\bibitem{Morris00}
Morris, S.: {Contagion}. Review of Economic Studies  67,  57--78 (2000)

\bibitem{Newman2002}
Newman, M.E.J.: {The spread of epidemic disease on networks}. Physical Review E
   66(1),  016128 (2002)

\bibitem{Ruan2011}
Ruan, J., Thielscher, M.: A logic for knowledge flow in social networks. In:
  Wang, D., Reynolds, M. (eds.) AI 2011: Advances in Artificial Intelligence,
  Lecture Notes in Computer Science, vol. 7106, pp. 511--520. Springer Berlin
  Heidelberg (2011)

\bibitem{Schelling1978}
Schelling, T.: {Micromotives and Macrobehavior}. Norton (1978)

\bibitem{Girard2011}
Seligman, J., Liu, F., Girard, P.: Logic in the community. In: Banerjee, M.,
  Seth, A. (eds.) Logic and Its Applications, LNCS, vol. 6521, pp. 178--188.
  Springer (2011)

\bibitem{Sharakarian_etal_2013}
Shakarian, P., Eyre, S., Paulo, D.: {A Scalable Heuristic for Viral Marketing
  Under the Tipping Model}. Social Network Analysis and Mining  3(4),
  1225--1248 (2013)

\bibitem{Skyrms1996}
Skyrms, B.: {Evolution of the Social Contract}. Cambridge University Press,
  Cambridge (1996)

\bibitem{Valente1996}
Valente, T.: {Social network thresholds in the diffusion of innovations}.
  Social Networks  18(1),  69--89 (Jan 1996)

\bibitem{Zhen2011}
Zhen, L., Seligman, J.: A logical model of the dynamics of peer pressure.
  Electronic Notes in Theoretical Computer Science  278,  275 -- 288 (2011)

\end{thebibliography}
\end{document}